\newtheorem{defin}{Definition}
\newtheorem{rem}{{Remark}}
\newtheorem{assm}{Assumption}
\newtheorem{proof}{Proof}
\begin{document}

\begin{frontmatter}

	\title{A Topology-Switching Coalitional Control and Observation Scheme with Stability Guarantees \thanksref{footnoteinfo}}

	\thanks[footnoteinfo]{This research was supported by the Spanish Training programme for Academic Staff (FPU17/02653), the MINECO-Spain project DPI2017-86918-R, and the European Research Council (Advanced Research Grant $769051$-OCONTSOLAR).}
	
	\author[First]{Paula Chanfreut} 
	\author[Second]{Twan Keijzer}
    \author[Second]{Riccardo M.G. Ferrari}
    \author[First]{Jose Maria Maestre}
	
	\address[Second]{Delft Center for Systems and Control, Delft University of Technology, Mekelweg 2, 2628 CD, Delft, The Netherlands. \\
		(e-mail: \{t.keijzer, r.ferrari\}@tudelft.nl)}
		
	\address[First]{Department of Systems and Automation Engineering, University of Seville, Camino de los Descubrimientos, Seville, Spain. \\
		(e-mail: \{pchanfreut, pepemaestre\}@us.es)}
	
\begin{abstract}                
	In this paper a coalitional control and observation scheme is presented in which the coalitions are changed online by enabling and disabling communication links. Transitions between coalitions are made to best balance overall system performance and communication costs. Linear Matrix Inequalities are used to design the controller and observer, guaranteeing stability of the switching system. Simulation results for vehicle platoon control are presented to illustrate the proposed method. 
\end{abstract}

\begin{keyword}
	Coalitional Control, 
	Switching Control,
	Linear Matrix Inequalities.
\end{keyword}

\end{frontmatter}

\section{Introduction}
\label{sec:introduction}
In the last years, growing size and complexity of systems has fostered the development of non-centralized strategies \citep{scattolini2009architectures,negenborn2014distributed}. Within this framework, the global system is partitioned into a set of subsystems of manageable size, which are assigned to a set of local controllers. The individual decisions are taken using local information and possibly other subsystem's data when it is shared by their corresponding control entities. Recent works have proposed control strategies where the communication burden is dynamically adjusted, e.g.,  \textit{coalitional} control. In this case, the set of agents is dynamically divided into disjoint communication components, or \textit{coalitions},  i.e.,  clusters of cooperating controllers which operate in a decentralized manner with respect to the rest of the system \citep{fele2014coalitional,fele2017coalitional}. Generally, the benefits of denser communication depend on the level of coupling among subsystems (see \citealt{rawlings2008coordinating}). That is, when the subsystems are weakly coupled, then the controllers' interaction will not significantly alter the solution obtained in a decentralized manner. Within this framework, it is interesting to look at controllers that maintain performance whilst reducing communication and computational demands. This idea is of interest for controlling large-scale systems where the coupling conditions can vary significantly in time. For example, variable controller structures are proposed  in \cite{dorfler2014sparsity} for improving oscillation damping in power systems, and  in \cite{marzband2017distributed} for coordinating distributed energy resources.
\subsection{Main Contributions}
In this paper the coalitional control scheme, with switching topologies, presented in \cite{maestre2014coalitional,Chanfreut2019} is extended to a more general scenario, while keeping all the stability guarantees. Under this framework, the topology of the communication network is dynamically selected from a set of predefined alternatives. A performance index evaluates the suitability of each topology, considering both the expected improvement for the global behaviour and the communication costs involved. In this paper, the states are no longer assumed to be exactly known by the agents, but are estimated from local measurements. This reduces the amount of information that needs to be communicated between agents. Furthermore, a dynamic control law is introduced, allowing to write extended system dynamics including the input. Using properties of combined control and observation scheme, it is proven that the controlled system with switching topologies is stable. An LMI formulation \citep{alamo2006introducing} is used to obtain
both the controller and observer gains for the set of possible communication topologies.

In Section \ref{sec:problem_statement}, the system considered in this paper is introduced, as well as the control objective. The controller and observer structures are introduced in Section \ref{sec:coal_control}, and the condition for switching between topologies is outlined in Section \ref{sec:coal_perf}. The controller and observer design procedure are described in Section \ref{sec:stable_design}. In Section \ref{sec:sim}, numerical results based on the example of a string of platooning cars are provided. Finally, the conclusion is presented in Section \ref{sec:conclusion}.
\section{Problem Formulation}
\label{sec:problem_statement}
In this paper we consider a system of $N$ interconnected agents, which can all have different internal dynamics. The full system dynamics can be represented as
\begin{equation}
\left\{\begin{aligned}
x_\mathcal{N}(k+1)&= A_\mathcal{N}x_\mathcal{N}(k)+ B_\mathcal{N} u_\mathcal{N}(k),\\
y_\mathcal{N}(k) &= C x_\mathcal{N}(k),
\end{aligned}\right.
\label{eq:fullsys}
\end{equation} 
where $x_\mathcal{N}\in\mathbb{R}^{n_x}$, $y_\mathcal{N}\in\mathbb{R}^{n_y}$, and $u_\mathcal{N}\in\mathbb{R}^{n_u}$ contain the stacked states, outputs, and inputs of all agents, respectively. Furthermore, output matrix $C$ is a block-diagonal matrix such that each agent can only measure local states. Matrices $A_\mathcal{N}\in\mathbb{R}^{n_x \times n_x}$ and $B_\mathcal{N}\in\mathbb{R}^{n_x \times n_u}$ can be full.

The $N$ agents are interconnected through a configurable communication network where each communication link can be dynamically activated or deactivated. Then, it possible to restrict or expand the possibilities for communication among the system. Hereafter, we will use $\mathcal{L}$ to denote the set of communication links, i.e., the connections between agents: 
\begin{equation}
\mathcal{L}\subseteq \mathcal{L}^\mathcal{N}=\{\{i,j\}|\{i,j\}\subseteq \mathcal{N}; \ i,j \in \mathcal{N}; \ i \neq j \}.
\end{equation}
\noindent Likewise, $\Lambda$ will denote a set of enabled links describing the network topology, and $\mathcal{T}$ will represent the set of possible topologies, that is,
\begin{equation}
\mathcal{T}= \lbrace \Lambda  \ | \ \Lambda \subseteq \mathcal{L} \rbrace.
\label{eq:topology}
\end{equation}
Note that each topology $\Lambda \in \mathcal{T}$ arranges the set of agents into clusters that can share data, i.e., the coalitions. Also, notice that the number of coalitions will range from one, when all agents are connected (i.e., $\Lambda= \mathcal{L}$), to $N$, corresponding to a fully decentralized system (i.e., $\Lambda= \lbrace \emptyset \rbrace$).

\begin{assm}\label{ass:stability}
In the topology where all agents are connected, $\Lambda= \mathcal{L}$, system \eqref{eq:fullsys} should be detectable and stabilizable.
\end{assm}

By imposing a certain topology of the network connecting the agents, the global system can be split up into disjoint coalitions. In particular, for a given topology,~$\Lambda$, system~\eqref{eq:fullsys} can be rewritten as
\begin{equation}
\left\{\begin{aligned}
x_\mathcal{N}(k+1)&= (A_\Lambda+A_\mathcal{R}) x_\mathcal{N}(k)+(B_\Lambda+B_\mathcal{R}) u_\mathcal{N}(k),\\
y_\mathcal{N}(k) &= C x_\mathcal{N}(k),
\end{aligned}\right.
\label{eq:coalitionsys}
\end{equation}
where $A_\Lambda$ and $B_\Lambda$ contain only the block-diagonal elements that represent the dynamics within the coalitions, and $A_\mathcal{R}=A_\mathcal{N}-A_\Lambda$ and $B_\mathcal{R}=B_\mathcal{N}-B_\Lambda$ represent only the interaction between coalitions. More generally, we will say that matrices $A_\Lambda$ and $B_\Lambda$ belong to set $\mathcal{M}_\Lambda$, which is defined as follows: 

\begin{defin}
Any matrix belong to set  $\mathcal{M}_\Lambda$ if all the entries connecting variables pertaining different coalitions are zero. 
\end{defin}

For a given topology, the following assumptions hold:
\begin{assm}\label{ass:communication}
Within a coalition, the agents communicate their local measurements and input at every time~step.
\end{assm}
\begin{assm}\label{ass:knowdynamics}
Each agent knows the rows pertaining their own coalition, of system dynamics matrices $A_\Lambda$, $B_\Lambda$, and $C$.
\end{assm}

\subsection{Control Objective}
The control objective is to concurrently optimize system performance, and reduce communication costs for all time. Assuming that the control goal is to regulate the state of all agents towards the origin, the objective function at time $k$ can be defined as
\begin{equation}\label{eq:CentralizedProblem}
 J(k)=\sum\limits_{j=0}^{\infty}  \ell(k+j) + c \ |\Lambda(k+j)|,
 \end{equation}
 \noindent where 
 \begin{equation}
 \ell(k) = \left[\begin{matrix}x_\mathcal{N}(k)\\u_\mathcal{N}(k)\\e_{x_\mathcal{N}}(k)\end{matrix}\right]^\top Q_\Lambda \left[\begin{matrix}x_\mathcal{N}(k)\\u_\mathcal{N}(k)\\e_{x_\mathcal{N}}(k)\end{matrix}\right];
\end{equation}
\noindent $e_{x_\mathcal{N}}$ is the state estimation error of the observer, which will be defined in section \ref{sec:coal_control}; $|\Lambda|$ is the cardinality of set $\Lambda$, representing the number of active communication links; $c$ is the cost for enabling a communication link; and $Q_\Lambda\succ0$ is a weighting matrix, which is defined as $Q_\Lambda=\mathrm{diag}(Q_x,R,Q_e)$, where $Q_x$, $R$ and $Q_e$ correspond to the terms of $x_\mathcal{N}$, $u_\mathcal{N}$ and $e_{x_\mathcal{N}}$ respectively.

 Notice that the introduction of variable $\Lambda$ noticeably hinders the control problem. In previous work \citep{maestre2014coalitional,Chanfreut2019} the authors have found an approximate optimal solution to this problem using Linear Matrix Inequalities (LMIs), assuming the agents have perfect knowledge of their state. In this paper this assumption is removed and an alternative control and observation scheme is introduced, while still only requiring LMIs to be solved for a guaranteed stable design.
\section{Coalitional Controller \& Observer}
\label{sec:coal_control} 
In this section the controller and observer forms used in the rest of this paper are defined, thus allowing us to write the combined dynamics of $\left[\begin{matrix}x_\mathcal{N}& u_\mathcal{N}& e_{x_\mathcal{N}}\end{matrix}\right]^\top$. The gain matrices appearing in these dynamics will be defined in Section \ref{ssec:ctrlobsgain}.

A controller will be designed to have the form
\begin{equation}\label{eq:ControlLaw}
    u_\mathcal{N}(k+1)=K_{x\Lambda}\hat{x}_\mathcal{N}(k) + K_{u\Lambda}u_\mathcal{N}(k),
\end{equation}
where $K_{x\Lambda}\in\mathcal{M}_\Lambda$ and $K_{u\Lambda}\in\mathcal{M}_\Lambda$ are gain matrices, and $\hat{x}_\mathcal{N}$ contains the stacked local state estimates, which result from the local observers in each agent.

The observer will be designed to have the form
\begin{equation}\label{eq:Observer}
    \hat{x}_\mathcal{N}(k+1) = A_\Lambda\hat{x}_\mathcal{N}(k) + B_\Lambda u_\mathcal{N}(k) + L_\Lambda(y_\mathcal{N}(k)-C\hat{x}_\mathcal{N}(k)),
\end{equation}
where $L_\Lambda\in\mathcal{M}_\Lambda$ is the observer gain matrix. The observer estimation error is defined as $e_{x_\mathcal{N}}=\hat{x}_\mathcal{N}-x_\mathcal{N}$.

\begin{rem}
  Equations \eqref{eq:ControlLaw} and \eqref{eq:Observer} define controller and observer forms in which control and observation are decoupled per coalition. 
\end{rem}

The combined dynamics of system, controller and observer can now be described by
\begin{equation}\label{eq:ControlledSys}\small{
\left[\begin{matrix}x_\mathcal{N}(k+1)\\u_\mathcal{N}(k+1)\\e_{x_\mathcal{N}}(k+1)\end{matrix}\right] = \left[\begin{matrix}
A_\mathcal{N}&B_\mathcal{N}&0\\
K_{x\Lambda}&K_{u\Lambda}&K_{x\Lambda}\\
A_\Lambda-A_\mathcal{N}&B_\Lambda-B_\mathcal{N}&A_\Lambda+L_\Lambda C
\end{matrix}\right]
\left[\begin{matrix}x_\mathcal{N}(k)\\u_\mathcal{N}(k)\\e_{x_\mathcal{N}}(k)\end{matrix}\right],}
\end{equation}
which can, in simplified notation, be rewritten to
\begin{equation}\label{eq:ControlledSysShort}
    \xi(k+1)=\mathbb{A}\xi(k),
\end{equation}
where $\xi$ and $\mathbb{A}$ are defined from equation \eqref{eq:ControlledSys}.
\section{Topology Switching Conditions}
\label{sec:coal_perf}
 In this section the criterion for evaluating the topologies performance is presented.

Recall the performance function in equation \eqref{eq:CentralizedProblem} and let us consider that there exists a symmetric positive definite matrix  $P_{\Lambda}\in\mathcal{M}_\Lambda$ such that

  \begin{equation}\label{eq:PA}
\xi(k)^\top P_\Lambda \xi(k) \geq \sum_{j=0}^\infty \ell(k+j),
\end{equation}

Note that actual $x_\mathcal{N}$ and $e_{x_\mathcal{N}}$ are not known by the agents, hence this function $\xi(k)^\top P_\Lambda \xi(k)$ cannot be evaluated on-line. However, it is possible to compute its largest possible value, i.e., 
\begin{equation}\label{eq:PA2}
\left[\begin{matrix}\bar{x}_\mathcal{N}(k)\\u_\mathcal{N}(k)\\\bar{e}_{x_\mathcal{N}}(k)\end{matrix}\right]^\top P_\Lambda \left[\begin{matrix}\bar{x}_\mathcal{N}(k)\\u_\mathcal{N}(k)\\\bar{e}_{x_\mathcal{N}}(k)\end{matrix}\right] \geq \sum_{j=0}^\infty \ell(k+j),
\end{equation}
where $\bar{x}_\mathcal{N}$ and $\bar{e}_{x_\mathcal{N}}$  are further defined in Section \ref{ssec:topologycond}. The resulting approximation can be calculated on-line under the following assumption:
\begin{assm}
\label{ass:topologyswitch}
When a topology switch is considered, all communication links are enabled, such that all agents can share their local state estimate and input.
\end{assm}

Regarding the communication costs in \eqref{eq:CentralizedProblem}, the infinite sum is replaced by a parameter $\kappa$, indicating the number of time steps over which the communication costs are counted. Taking both terms together, we consider the following cost function to evaluate topology performance
\begin{equation} \label{eq:r}
r_\Lambda(k)= \left[\begin{matrix}\bar{x}_\mathcal{N}(k)\\u_\mathcal{N}(k)\\\bar{e}_{x_\mathcal{N}}(k)\end{matrix}\right]^\top P_\Lambda \left[\begin{matrix}\bar{x}_\mathcal{N}(k)\\u_\mathcal{N}(k)\\\bar{e}_{x_\mathcal{N}}(k)\end{matrix}\right] + \kappa c |\Lambda|,
\end{equation}
which can, in simplified notation, be rewritten to
\begin{equation}\label{eq:rSimple}
r_\Lambda = \bar{\xi}(k)^\top P_\Lambda \bar{\xi}(k) + \kappa c |\Lambda|.
\end{equation}

In the proposed method, a topology switch is considered at regular intervals of $T$ time-steps. A switch will occur if there exists a topology $\Lambda_j\in\mathcal{T}$, other than the current topology $\Lambda_i$, such that $r_{\Lambda_j}(k)<r_{\Lambda_i}(k)$.
\section{A design method with stability guarantees} 
\label{sec:stable_design}
In this section a procedure is presented for designing the controller and observer gains. Furthermore, the worst case state and state estimation error, required for the topology switching condition, will be presented. The control scheme will then be summarized in a pseudo-code and lastly, we will prove that the proposed control scheme is stable.

\subsection{Design of the Controller and Observer gains}
\label{ssec:ctrlobsgain}
First, the controller gains are designed such that equation \eqref{eq:PA} holds, assuming $e_{x_\mathcal{N}}=0$. For this design objective, the method introduced in \cite{Chanfreut2019,maestre2014coalitional} can be adapted. For this purpose we rewrite equation \eqref{eq:ControlledSys} as
\begin{equation}\label{eq:IdealControlledSys}
\left[\begin{matrix}x_\mathcal{N}(k+1)\\u_\mathcal{N}(k+1)\end{matrix}\right] = \left(\underbrace{\left[\begin{matrix}
A_\mathcal{N}&B_\mathcal{N}\\
0&0
\end{matrix}\right]}_{\mathcal{A}_\mathcal{N}}+\underbrace{\left[\begin{matrix}
0\\
I
\end{matrix}\right]}_{\mathcal{B}_\mathcal{N}}
\underbrace{\left[\begin{matrix}
K_{x\Lambda}&K_{u\Lambda}
\end{matrix}\right]}_{\mathcal{K}_\Lambda}\right)
\left[\begin{matrix}x_\mathcal{N}(k)\\u_\mathcal{N}(k)\end{matrix}\right],
\end{equation}
and equation \eqref{eq:PA} as
\begin{equation}\label{eq:IdealPA}
\left[\begin{matrix}x_\mathcal{N}(k)\\u_\mathcal{N}(k)\end{matrix}\right]^\top \mathcal{P}_\Lambda \left[\begin{matrix}x_\mathcal{N}(k)\\u_\mathcal{N}(k)\end{matrix}\right] \geq \sum_{j=0}^\infty \left[\begin{matrix}x_\mathcal{N}(k+j)\\u_\mathcal{N}(k+j)\end{matrix}\right]^\top \mathcal{Q}_\Lambda \left[\begin{matrix}x_\mathcal{N}(k+j)\\u_\mathcal{N}(k+j)\end{matrix}\right],
\end{equation}
where $\mathcal{P}_\Lambda\in \mathcal{M}_\lambda$, and $\mathcal{Q}_\Lambda=\mathrm{diag}(Q_x,R)$.
\begin{theorem}\label{thm:ControlLMI}
For the controlled system \eqref{eq:IdealControlledSys}, if there exist matrices
\begin{subequations}
\begin{equation}
W_{\Lambda}=\mathcal{P}_{\Lambda}^{-1} \in \mathcal{M}_\Lambda, \quad Y_{\Lambda}=\mathcal{K}_{\Lambda} W_{\Lambda} \in \mathcal{M}_\Lambda,
\end{equation}
\label{eq:change_variable}
\end{subequations}
\noindent obtained from problem
\begin{subequations}
\begin{equation*}
\begin{aligned}
&\max_{W_\Lambda, Y_\Lambda}\text{trace}(W_\Lambda)\\
&\text{subject to}
\phantom{\hspace{5cm}}
\end{aligned}
\end{equation*}
\begin{equation}\label{eq:LMI2a}
\left[ 
\begin{array}{cccc}
W_{\Lambda} & * & *\\
\mathcal{A}_\mathcal{N}W_{\Lambda} + \mathcal{B}_\mathcal{N}Y_{\Lambda} & W_{\Lambda}& * \\
\mathcal{Q}_\mathcal{N}^{1/2}W_{\Lambda} & 0 & I  \\
\end{array} \right] \succ 0.
\end{equation}
\label{eq:LMIS}
\end{subequations}
\noindent Then, the following properties hold:
\begin{itemize}
\item Controller \eqref{eq:ControlLaw} with gain $\mathcal{K}_{\Lambda}$ guarantees closed-loop stability of the ideal system \eqref{eq:IdealControlledSys} and $\mathcal{P}_{\Lambda}$ satisfies \eqref{eq:IdealPA}.
\item The communication constraints imposed by the topology are respected.
\end{itemize}
\end{theorem}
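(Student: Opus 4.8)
The plan is to read \eqref{eq:LMI2a} as a Schur-complement form of the standard discrete-time Lyapunov inequality for the closed loop, and then to convert that inequality into the guaranteed-cost bound \eqref{eq:IdealPA} by summing along the trajectories of \eqref{eq:IdealControlledSys}.

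First I would reverse the change of variables \eqref{eq:change_variable}. Substituting $Y_\Lambda = \mathcal{K}_\Lambda W_\Lambda$ turns the $(2,1)$ block into $\mathcal{A}_\mathcal{N} W_\Lambda + \mathcal{B}_\mathcal{N}\mathcal{K}_\Lambda W_\Lambda = \mathcal{A}_{cl} W_\Lambda$, where $\mathcal{A}_{cl} = \mathcal{A}_\mathcal{N} + \mathcal{B}_\mathcal{N}\mathcal{K}_\Lambda$ is the closed-loop matrix, so that \eqref{eq:LMI2a} is expressed entirely through $\mathcal{A}_{cl}$ and $W_\Lambda = \mathcal{P}_\Lambda^{-1}$. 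Since $\mathcal{P}_\Lambda \succ 0$ forces $W_\Lambda \succ 0$, the lower-right $2\times 2$ block $\mathrm{diag}(W_\Lambda, I)$ is positive definite and invertible, so I would take the Schur complement of \eqref{eq:LMI2a} with respect to it. A short computation gives the complement $W_\Lambda - W_\Lambda \mathcal{A}_{cl}^\top \mathcal{P}_\Lambda \mathcal{A}_{cl} W_\Lambda - W_\Lambda \mathcal{Q}_\Lambda W_\Lambda \succ 0$, and a congruence transformation by $\mathcal{P}_\Lambda = W_\Lambda^{-1}$, which preserves definiteness, yields the Lyapunov inequality $\mathcal{P}_\Lambda - \mathcal{A}_{cl}^\top \mathcal{P}_\Lambda \mathcal{A}_{cl} - \mathcal{Q}_\Lambda \succ 0$.

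From this inequality both claimed properties follow. Writing $z(k) = [x_\mathcal{N}(k)^\top, u_\mathcal{N}(k)^\top]^\top$, $V(k) = z(k)^\top \mathcal{P}_\Lambda z(k)$, and using $z(k+1) = \mathcal{A}_{cl} z(k)$, evaluating the Lyapunov inequality as a quadratic form at $z(k)$ gives $V(k) - V(k+1) \geq z(k)^\top \mathcal{Q}_\Lambda z(k) \geq 0$ along trajectories. Because $\mathcal{Q}_\Lambda \succ 0$, this exhibits $V$ as a strictly decreasing Lyapunov function, so $\mathcal{A}_{cl}$ is Schur stable and $V(k+j) \to 0$ as $j \to \infty$. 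Telescoping the per-step decrease from $j = 0$ to $\infty$ then gives $V(k) \geq \sum_{j=0}^\infty z(k+j)^\top \mathcal{Q}_\Lambda z(k+j)$, which is precisely \eqref{eq:IdealPA} and establishes the first property. For the second, I would observe that imposing $W_\Lambda \in \mathcal{M}_\Lambda$ and $Y_\Lambda \in \mathcal{M}_\Lambda$ forces the recovered gain $\mathcal{K}_\Lambda = Y_\Lambda W_\Lambda^{-1}$ to lie in $\mathcal{M}_\Lambda$ as well, since products and inverses of matrices in $\mathcal{M}_\Lambda$ stay in $\mathcal{M}_\Lambda$; the controller \eqref{eq:ControlLaw} then couples only variables inside a common coalition, so the topology constraint is respected.

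The Schur-complement manipulation and the telescoping sum are mechanical; I expect the only point requiring care to be the structural bookkeeping, namely checking that enforcing $W_\Lambda, Y_\Lambda \in \mathcal{M}_\Lambda$ is genuinely equivalent to, and not merely sufficient for, a block-diagonal $\mathcal{K}_\Lambda \in \mathcal{M}_\Lambda$, and confirming that $\mathcal{Q}_\mathcal{N}$ appearing in \eqref{eq:LMI2a} is the same weighting $\mathcal{Q}_\Lambda = \mathrm{diag}(Q_x, R)$ used in \eqref{eq:IdealPA}.
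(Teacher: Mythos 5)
Your proof is correct, and it is worth noting how it relates to the paper: the paper gives no argument at all for this theorem, but simply invokes Theorem 1 of \cite{maestre2014coalitional}. What you have written is precisely the standard derivation that underlies such a citation: undo the change of variables \eqref{eq:change_variable} so that the $(2,1)$ block of \eqref{eq:LMI2a} becomes $\mathcal{A}_{\mathrm{cl}}W_\Lambda$ with $\mathcal{A}_{\mathrm{cl}}=\mathcal{A}_\mathcal{N}+\mathcal{B}_\mathcal{N}\mathcal{K}_\Lambda$, take the Schur complement with respect to the positive definite block $\mathrm{diag}(W_\Lambda,I)$, apply the congruence by $\mathcal{P}_\Lambda=W_\Lambda^{-1}$ to obtain $\mathcal{P}_\Lambda-\mathcal{A}_{\mathrm{cl}}^\top\mathcal{P}_\Lambda\mathcal{A}_{\mathrm{cl}}-\mathcal{Q}_\Lambda\succ 0$, and telescope the resulting per-step decrease to get \eqref{eq:IdealPA}. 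So your route is not different in substance; what it buys is self-containedness, replacing an external reference by a complete argument. Two small remarks. First, the caution in your last paragraph about whether structuring $W_\Lambda,Y_\Lambda\in\mathcal{M}_\Lambda$ is \emph{equivalent} to structuring $\mathcal{K}_\Lambda$ is unnecessary for the theorem as stated: it is a pure sufficiency claim, so you only need the direction you already proved, namely that the coalition-block structure is closed under products and inversion, whence $\mathcal{K}_\Lambda=Y_\Lambda W_\Lambda^{-1}\in\mathcal{M}_\Lambda$ (and likewise $\mathcal{P}_\Lambda=W_\Lambda^{-1}\in\mathcal{M}_\Lambda$, which \eqref{eq:IdealPA} implicitly requires). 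The parametrization is in fact conservative in general---not every admissible structured gain is reachable this way---but that affects feasibility of the design problem, not the validity of the stated implication. Second, your observation that $\mathcal{Q}_\mathcal{N}$ in \eqref{eq:LMI2a} must be read as $\mathcal{Q}_\Lambda=\mathrm{diag}(Q_x,R)$ is right; it is a notational slip in the paper, and since the paper assumes $Q_\Lambda\succ 0$ this block is positive definite, which is what licenses your strict Lyapunov decrease, the Schur stability of $\mathcal{A}_{\mathrm{cl}}$, and the vanishing of $V(k+j)$ as $j\to\infty$ in the telescoping step.
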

\begin{proof}
The proof follows from Theorem 1 in \cite{maestre2014coalitional}.
\end{proof}

The observer design is based on \cite{BenChabane2014-1}, where the actual state is guaranteed to be within an ellipsoid around the state estimate. To this end, rewrite equation \eqref{eq:coalitionsys} as
\begin{equation}
x_\mathcal{N}(k+1) = A_\Lambda x_\mathcal{N}(k) + B_\Lambda u_\mathcal{N}(k) + \underbrace{\left[\begin{matrix}
A_\mathcal{R}&B_\mathcal{R}
\end{matrix}\right]}_{E}
\underbrace{\left[\begin{matrix}x_\mathcal{N}(k)\\u_\mathcal{N}(k)\end{matrix}\right]}_{\omega(k)},
\end{equation}
and define ellipsoidal set $\mathcal{E}(P,\hat{x},\rho) = \{x:(x-\hat{x})^\top P (x-\hat{x})\leq\rho\}$. 
\begin{theorem}
Assume $x_\mathcal{N}(k)\in\mathcal{E}(\mathbb{P} ,\hat{x}_\mathcal{N}(k),\rho_\Lambda(k))$, where $\mathbb{P} \succ 0$, and assume $\omega(k)\in\mathcal{V}_\mathbf{o}$, where $\mathcal{V}_\mathbf{o}$ is a hyper-cube enclosing $\omega(k)$ for all $k$. If there exist a matrix $\mathbb{Y}_\Lambda = \mathbb{P} L_\Lambda \in\mathbb{R}^{n_x\times n_y}$ satisfying the communication constraints (i.e., $\mathbb{Y}_\Lambda \in \mathcal{M}_\Lambda$); a scalar $\beta_\Lambda \in(0,1)$; and a positive scalar $\sigma_\Lambda$ such that LMI
\begin{equation}
    \left[\begin{matrix}\beta_\Lambda \mathbb{P}&*&*\\
    0&\sigma_{\Lambda}&*\\
    \mathbb{P} A_\Lambda-\mathbb{Y}_\Lambda C &\mathbb{P} E\omega(k)&\mathbb{P}\\
    \end{matrix}\right]\succ 0,
\label{eq:observerlmi1}
\end{equation}
\noindent is satisfied for all $\omega(k)\in\mathcal{V}_\mathbf{o}$, then, the next system state $x_\mathcal{N}(k+1)$ is guaranteed to belong to ellipsoid $\mathcal{E}(\mathbb{P},\hat{x}_\mathcal{N}(k+1),\rho_\Lambda(k+1))$, where
\begin{equation}\label{eq:rho_update}
    \rho_\Lambda(k+1)\leq \beta_\Lambda \rho(k)+\sigma_\Lambda.
\end{equation}
\end{theorem}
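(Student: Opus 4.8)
The plan is to show that the estimation error dynamics, when combined with the ellipsoidal invariance condition, yield the claimed update for $\rho_\Lambda$. First I would derive the exact dynamics of the estimation error $e_{x_\mathcal{N}}=\hat{x}_\mathcal{N}-x_\mathcal{N}$. Subtracting the true dynamics (in the rewritten form with the $E\omega(k)$ term) from the observer equation \eqref{eq:Observer}, and using $y_\mathcal{N}(k)=Cx_\mathcal{N}(k)$, the coalition-internal terms $A_\Lambda, B_\Lambda$ cancel and one obtains
\begin{equation*}
e_{x_\mathcal{N}}(k+1)=(A_\Lambda-L_\Lambda C)\,e_{x_\mathcal{N}}(k)-E\omega(k).
\end{equation*}
The hypothesis $x_\mathcal{N}(k)\in\mathcal{E}(\mathbb{P},\hat{x}_\mathcal{N}(k),\rho_\Lambda(k))$ says precisely that $e_{x_\mathcal{N}}(k)^\top\mathbb{P}\,e_{x_\mathcal{N}}(k)\le\rho_\Lambda(k)$, and I must show $e_{x_\mathcal{N}}(k+1)^\top\mathbb{P}\,e_{x_\mathcal{N}}(k+1)\le\beta_\Lambda\rho_\Lambda(k)+\sigma_\Lambda$.

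Next I would convert the desired conclusion into an S-procedure statement. The goal is to prove the implication
\begin{equation*}
e^\top\mathbb{P}\,e\le\rho_\Lambda(k)
\;\Longrightarrow\;
\big((A_\Lambda-L_\Lambda C)e-E\omega\big)^\top\mathbb{P}\big((A_\Lambda-L_\Lambda C)e-E\omega\big)\le\beta_\Lambda\rho_\Lambda(k)+\sigma_\Lambda,
\end{equation*}
for the fixed disturbance value $\omega=\omega(k)$. By the S-procedure this holds provided there is a multiplier (which is exactly the role of $\beta_\Lambda$ scaling the quadratic constraint and $\sigma_\Lambda$ absorbing the constant and disturbance contribution) making the aggregate quadratic form negative semidefinite. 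Concretely, I would require that for all admissible $e$,
\begin{equation*}
\big((A_\Lambda-L_\Lambda C)e-E\omega\big)^\top\mathbb{P}\big((A_\Lambda-L_\Lambda C)e-E\omega\big)-\beta_\Lambda\,e^\top\mathbb{P}\,e-\sigma_\Lambda\le 0,
\end{equation*}
after normalising $\rho_\Lambda(k)$ out; the step update \eqref{eq:rho_update} then follows by evaluating at the boundary $e^\top\mathbb{P}\,e=\rho_\Lambda(k)$ and tracking the homogeneity in $\rho$.

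Finally I would recover the LMI \eqref{eq:observerlmi1} from this quadratic inequality by a Schur complement. Collecting $e$ and the constant direction $\omega$ into a block vector, the matrix inequality in the variables $(e,1)$ has the block structure $\mathrm{diag}(\beta_\Lambda\mathbb{P},\sigma_\Lambda)$ on the leading diagonal and the combined map $[\,\mathbb{P}A_\Lambda-\mathbb{Y}_\Lambda C\;\;\mathbb{P}E\omega(k)\,]$ coupling to the $\mathbb{P}$-weighted output block, with the substitution $\mathbb{Y}_\Lambda=\mathbb{P}L_\Lambda$ linearising the product $\mathbb{P}L_\Lambda C$. A single Schur complement on the trailing $\mathbb{P}$ block reproduces exactly \eqref{eq:observerlmi1}, and positivity of that $3\times3$ block matrix is equivalent to the quadratic inequality above. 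The requirement that this hold for all $\omega(k)\in\mathcal{V}_\mathbf{o}$ is handled by convexity: since $\mathcal{V}_\mathbf{o}$ is a hyper-cube, it suffices to impose the LMI at its finitely many vertices. \textbf{The main obstacle} is bookkeeping the S-procedure normalisation so that the multiplier structure yields precisely the affine update $\beta_\Lambda\rho(k)+\sigma_\Lambda$ rather than a looser bound; in particular one must verify that $\beta_\Lambda\in(0,1)$ is what makes the $\rho$-recursion contractive in the homogeneous part while $\sigma_\Lambda$ collects the disturbance-induced offset, and that the Schur complement is applied in the direction that keeps $\mathbb{P}$ (not its inverse) as the free variable so the constraint $\mathbb{Y}_\Lambda\in\mathcal{M}_\Lambda$ remains linear.
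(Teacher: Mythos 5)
Your proposal is correct, but it takes a genuinely different route from the paper: the paper does not prove this theorem in a self-contained way at all --- its proof is a citation to the set-membership observer result of \cite{BenChabane2014-1}, plus two remarks (the structural constraint $\mathbb{Y}_\Lambda\in\mathcal{M}_\Lambda$ only restricts the feasible set, and $\omega(k)$ now stacks $x_\mathcal{N}$ and $u_\mathcal{N}$ rather than state and measurement noise), with the assertion that neither change affects the cited argument. What you have done is reconstruct that underlying argument from scratch, and your reconstruction is sound: the error dynamics $e_{x_\mathcal{N}}(k+1)=(A_\Lambda-L_\Lambda C)e_{x_\mathcal{N}}(k)-E\omega(k)$ are exactly right (the $A_\Lambda\hat{x}_\mathcal{N}$ and $B_\Lambda u_\mathcal{N}$ terms cancel and $y_\mathcal{N}=Cx_\mathcal{N}$ turns the innovation into $-L_\Lambda Ce_{x_\mathcal{N}}$); the sufficient condition ``for all $e$: $e(k+1)^\top\mathbb{P}e(k+1)\le\beta_\Lambda e^\top\mathbb{P}e+\sigma_\Lambda$'' already yields \eqref{eq:rho_update} on the ellipsoid directly, so your talk of normalising $\rho_\Lambda$ out and evaluating at the boundary is superfluous (though harmless); homogenising in $(e,1)$ and taking a Schur complement on the trailing $\mathbb{P}$ block gives the $3\times 3$ LMI with $\mathbb{Y}_\Lambda=\mathbb{P}L_\Lambda$ linearising the bilinear term; and vertex enumeration over $\mathcal{V}_\mathbf{o}$ is justified by affinity of $\omega(k)$ in the LMI, which is precisely the paper's Remark~3. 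One cosmetic discrepancy worth flagging: your derivation places $-\mathbb{P}E\omega(k)$ in the $(3,2)$ block where the paper writes $+\mathbb{P}E\omega(k)$; the two matrices are congruent via $\mathrm{diag}(I,-1,I)$, hence the LMIs are equivalent, but a complete write-up should say so. What your approach buys is a verifiable, self-contained proof that also makes transparent \emph{why} the paper's two modifications are harmless --- something the paper merely asserts; what the paper's approach buys is brevity and direct attribution to the original reference.
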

\begin{proof}
The proof follows from \cite{BenChabane2014-1}. The following changes have been made here, that do not affect the proof:
\begin{itemize}
    \item A constraint has been imposed on the structure of $\mathbb{Y}_\Lambda$, i.e., $\mathbb{Y}_\Lambda \in \mathcal{M}_\Lambda$.
    \item Vector $\omega(k)$ is of lenght $n_x+n_u$ instead of $n_x+n_y$
\end{itemize}
Both changes influence the feasible set corresponding to the problem, but the proof presented in \cite{BenChabane2014-1} holds without change.
\end{proof}

Variable $\rho_\Lambda(k)$ will converge to $\sigma_\Lambda/(1-\beta_\Lambda)$ for $k\rightarrow\infty$, hence, reducing the size of the ellipsoid can
be done by minimizing a combination of $\sigma_\Lambda$ and $\beta_\Lambda$. (see \cite{BenChabane2014-1}). In particular, to obtain variables  $\mathbb{Y}_\Lambda$, $\beta_\Lambda$ and $\sigma_\Lambda$, we have minimized function $\sigma_\Lambda + \epsilon \beta_\Lambda$, where $\epsilon$ is a weighting parameter, subject to LMI \eqref{eq:observerlmi1}.

\begin{rem}
Hypercube $\mathcal{V}_\mathbf{o}$ is a convex polytope, and $\omega(k)$ appears affine in LMI \eqref{eq:observerlmi1}. Therefore, if the LMI is feasible at the corner points of $\mathcal{V}_\mathbf{o}$, it is feasible for all $\omega(k)$.
\end{rem}
\subsection{Analysis of System Stability}
\label{ssec:stabanalysis}
As the controller and observer are designed independently, stability and performance of the controlled system is not guaranteed. Therefore an analysis LMI is constructed that guarantees the decrease of $\xi(k)^\top P_\Lambda \xi(k)$ between topology switches.

This analysis LMI is equivalent to the one presented in equation \eqref{eq:LMI2a} for design. However, as the control and observer gains are now constants, a simpler form of the LMI can be used, which is shown in equation \eqref{eq:analysisLMI}. (see \cite{maestre2014coalitional})
\begin{equation}
\label{eq:analysisLMI}
\begin{aligned}
\mathbb{A}^\top P_\Lambda \mathbb{A}-P_\Lambda\prec-Q_\Lambda
\end{aligned}
\end{equation}

\subsection{Topology Switch Condition}
\label{ssec:topologycond}
The topology switch criteria presented in equation \eqref{eq:rSimple}, needs to be evaluated by each agent when a topology switch is considered. For this, each agent needs to calculate the state and estimation error, $\bar{x}_\mathcal{N}$ and $\bar{e}_{x_\mathcal{N}}$, that cause the largest cost in equation \eqref{eq:rSimple}. This is equivalent to finding values of $x_\mathcal{N}$ and $e_{x_\mathcal{N}}$ that maximize $\xi(k)^\top P_\Lambda \xi(k)$. 

Due to real-time constraints, it was chosen not to use an optimization problem to obtain the tightest ellipsoid fitting this requirement. Alternatively, a much less computationally expensive, but slightly conservative, method has been chosen. A hyper-cube, $\mathcal{V}_{\mathbf{b},\Lambda}$, is defined, enclosing ellipsoid $\mathcal{E}(\mathbb{P},\hat{x}_\mathcal{N}(k),\rho_\Lambda(k))$. The vertices of this hyper-cube define the $2^{n_x}$ options for $x_\mathcal{N}$ and $e_{x_\mathcal{N}}=\hat{x}_\mathcal{N}-x_\mathcal{N}$ that could be $\bar{x}_\mathcal{N}$ and $\bar{e}_{x_\mathcal{N}}$. So, for all vertices $\xi(k)^\top P_\Lambda \xi(k)$ is evaluated, and $\bar{\rho}(k) = \max_{\mathcal{V}_{b,\Lambda}}(\xi(k)^\top P_\Lambda \xi(k))$ and $\bar{x}_\mathcal{N}$, and $\bar{e}_{x_\mathcal{N}}$ are the arguments that create $\bar{\rho}(k)$.

When a topology switch is considered equation \eqref{eq:rSimple} is considered for all possible topologies, and the topology that minimises the criteria is chosen, together with the corresponding control and observer gains.
\subsection{Control scheme}\label{ssec:controlscheme}
In this section the control scheme will be presented that combines all methods outlined in the previous sections. Note that in step 12: $^T$ indicates to the power $T$, not transpose.
\begin{algorithm}
\textbf{Initialize:} Set the initial values for $\hat{x}_{\mathcal{N}}$, $\Lambda$, fix the constants $Q_\Lambda$, $c$, $T$, and determine ellipsoid $\mathcal{E}(\mathbb{P},\hat{x}_\mathcal{N}(0),\rho_\Lambda)$.
\begin{algorithmic}[1]
\ForAll{$\Lambda\in\mathcal{T}$} \Comment{Offline Design}
\State Design $K_{x\Lambda}$ and $K_{u\Lambda}$ \Comment{eq. \eqref{eq:LMIS}}
\State Design $L_{\Lambda}$, $\beta_\Lambda$ and $\sigma_\Lambda$  \Comment{eq. \eqref{eq:observerlmi1}}
\State Check stability and determine $P_\Lambda$ \Comment{eq. \eqref{eq:analysisLMI}}
\If{Topology $\Lambda$ is infeasible} \State $\mathcal{T}\gets\mathcal{T}\setminus \Lambda$
\EndIf
\EndFor
\While{True} \Comment{Online Execution}
\If{$k/T \in \mathbb{N}$} All agents share local state estimate and input over the whole network
\ForAll{$\Lambda\in\mathcal{T}$}
\State All agents calculate $\rho_\Lambda \gets \beta_\Lambda^{T} \rho_\Lambda + \sum_{i=0}^{T-1} \beta_\Lambda^{i}\sigma_\Lambda$ \Comment{eq. \eqref{eq:rho_update}}
\State Find $\bar{x}_\mathcal{N}(k)$, $\bar{e}_{x_\mathcal{N}}(k)$ and $\bar{\rho}_\Lambda(k)$ \Comment{sec. \ref{ssec:topologycond}}
\EndFor
\State Pick $\Lambda$ for which $r_{\Lambda}(k)$ is minimal. \Comment{eq. \eqref{eq:rSimple}}
\Else $~$All agents share local measurement and input only within their coalition.
\EndIf
\State All agents implement $u_\mathcal{N}(k)$
\State All agents calculate $\hat{x}_\mathcal{N}(k+1)$ \Comment{eq. \eqref{eq:Observer}}
\State All agents calculate $u_\mathcal{N}(k+1)$ \Comment{eq. \eqref{eq:ControlLaw}}
\State $k=k+1$
\EndWhile
\end{algorithmic}
\end{algorithm}
\subsection{Stability Properties}
\label{ssec:stabproof}
\begin{theorem}
The control scheme presented in section \ref{ssec:controlscheme} leads to a controlled system \eqref{eq:ControlledSysShort}, which is guaranteed to be stable.
\end{theorem}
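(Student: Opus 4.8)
The plan is to treat the closed loop \eqref{eq:ControlledSysShort} as a switched linear system and to invoke a multiple-Lyapunov-function argument with the piecewise-quadratic candidate $V(k)=\xi(k)^\top P_{\Lambda(k)}\xi(k)$, where $\Lambda(k)$ is the topology active at time $k$. Two facts must be established: that $V$ decreases strictly along the dynamics while the topology is held fixed, and that the switching rule of Section~\ref{ssec:controlscheme} controls the jump of $V$ at the switching instants, which occur at most once every $T$ steps.

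The first fact is the routine part. Substituting $\xi(k+1)=\mathbb{A}\xi(k)$ into $V$ and using the analysis LMI \eqref{eq:analysisLMI}, namely $\mathbb{A}^\top P_\Lambda\mathbb{A}-P_\Lambda\prec-Q_\Lambda$ with $Q_\Lambda\succ0$, yields $V(\xi(k+1))-V(\xi(k))=\xi(k)^\top(\mathbb{A}^\top P_\Lambda\mathbb{A}-P_\Lambda)\xi(k)<-\xi(k)^\top Q_\Lambda\xi(k)<0$ for every $\xi(k)\neq0$. Hence each $P_\Lambda$ is a strict Lyapunov function for its own mode, and $V$ decreases strictly throughout any dwell interval of length $T$. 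Assumption~\ref{ass:stability} guarantees that the fully connected topology $\Lambda=\mathcal{L}$ is detectable and stabilizable, so the admissible set $\mathcal{T}$ used in the algorithm is nonempty and the scheme is well posed.

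The main obstacle is the switching instants, where $V$ may jump because $P_{\Lambda_j}\neq P_{\Lambda_i}$ and because the criterion \eqref{eq:rSimple} is evaluated on the worst-case surrogate $\bar\xi$ rather than on the unknown true state $\xi$. I would proceed in two steps. First, by construction in Section~\ref{ssec:topologycond}, $\bar\xi$ maximizes $\xi^\top P_\Lambda\xi$ over a set (the vertices of the hypercube enclosing the observer ellipsoid) that provably contains the true state, so $\xi(k)^\top P_\Lambda\xi(k)\le\bar\xi(k)^\top P_\Lambda\bar\xi(k)$ for each candidate topology. Second, the switch condition $r_{\Lambda_j}(k)<r_{\Lambda_i}(k)$ rewrites as $\bar\xi^\top P_{\Lambda_j}\bar\xi<\bar\xi^\top P_{\Lambda_i}\bar\xi+\kappa c\,(|\Lambda_i|-|\Lambda_j|)$, and chaining the two estimates bounds the post-switch value of $V$ in terms of its pre-switch value.

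The delicate point, and where I expect the real work to lie, is the communication-cost term $\kappa c\,(|\Lambda_i|-|\Lambda_j|)$: when the scheme moves to a sparser, cheaper topology this term is positive, so the quadratic part of $V$ is permitted to increase by a bounded amount at the switch. Since this increment is uniformly bounded by $\kappa c\,|\mathcal{L}|$ while the intra-mode decrease over each length-$T$ interval is strict, I would argue that $V$ cannot grow without bound: for large $\|\xi\|$ the quadratic term dominates the bounded offset, so the strict decrease prevails and drives the trajectory into a bounded sublevel set, and inside that set the same estimate yields convergence to a residual neighbourhood whose size is governed by $\kappa c$ and by the asymptotic ellipsoid radius $\sigma_\Lambda/(1-\beta_\Lambda)$ from \eqref{eq:rho_update}. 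This establishes (practical) stability of \eqref{eq:ControlledSysShort}; recovering asymptotic convergence to the origin requires the additional bootstrapping observation that, as the state and input vanish, the exogenous signal $\omega$ and hence $\bar\xi$ vanish as well, collapsing the residual set.
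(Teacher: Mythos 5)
Your proposal follows the same overall route as the paper: the candidate $V(k)=\xi(k)^\top P_{\Lambda(k)}\xi(k)$, intra-mode strict decrease obtained from the analysis LMI \eqref{eq:analysisLMI}, and the worst-case surrogate $\bar\xi$ of Section~\ref{ssec:topologycond} to control what happens at switching instants. Where you genuinely diverge is in the treatment of the switches themselves. The paper asserts that, ``as a direct consequence of the switching condition,'' the worst-case value $\bar\rho(k)$ never increases across a switch, and rests the whole argument on that monotonicity. You correctly observe that the switching rule \eqref{eq:rSimple} only controls $r_\Lambda=\bar\xi^\top P_\Lambda\bar\xi+\kappa c|\Lambda|$, so that when the scheme moves to a sparser topology the quadratic part is in fact allowed to grow by up to $\kappa c\,(|\Lambda_i|-|\Lambda_j|)\le\kappa c|\mathcal{L}|$; the paper's monotonicity claim is strictly true only for $r_\Lambda$, not for $\bar\rho$. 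Your replacement argument --- bounded additive jumps at switches versus strict contraction during dwell intervals, yielding convergence to a residual set --- is the honest conclusion, and it is consistent with what the paper's proof actually delivers (non-destabilization, i.e.\ boundedness), so this part of your proposal is, if anything, more careful than the published proof.

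Two caveats. First, your ``chaining'' step bounds the post-switch value of $V$ by the \emph{pre-switch worst case} $\bar\xi^\top P_{\Lambda_i}\bar\xi$, not by the pre-switch value $V^-=\xi^\top P_{\Lambda_i}\xi$; the gap between the two is governed by the observer ellipsoid radius $\rho_\Lambda$ and by $\|\xi\|$ itself (a cross term), so the jump is not purely additive relative to $V^-$. This is repairable --- the radius is uniformly bounded by \eqref{eq:rho_update} via $\sigma_\Lambda/(1-\beta_\Lambda)$, and the cross term can be absorbed with Young's inequality into a slightly degraded contraction factor, which still beats the strict decrease since $T\ge1$ --- but it must be done explicitly, otherwise the ``bounded offset'' claim is unjustified. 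Second, your final bootstrapping step to asymptotic convergence to the origin does not work as stated: the offsets that define your residual set, namely $\kappa c$ and the asymptotic ellipsoid radius $\sigma_\Lambda/(1-\beta_\Lambda)$, do \emph{not} vanish as the state and input vanish, because $\sigma_\Lambda$ is computed offline for the worst case $\omega\in\mathcal{V}_\mathbf{o}$ over a fixed hypercube and is not re-scaled online. So your argument establishes practical stability (uniform ultimate boundedness) of \eqref{eq:ControlledSysShort}, which matches the guarantee the paper itself actually proves, but the last sentence claiming recovery of convergence to the origin should be deleted or replaced by an additional assumption (e.g.\ eventual cessation of switching, after which the fixed Schur dynamics give $\xi\to0$).
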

\begin{proof}

Using the analysis LMI from equation \eqref{eq:analysisLMI} in steps 4-7 of the control scheme, it is guaranteed that $r_\Lambda$ is decreasing between topology switches for all $\Lambda$ that are left in $\mathcal{T}$. Therefore, it only needs to be proven that the behaviour at topology switches does not destabilize the system.

In \cite{maestre2014coalitional}, the authors prove that index $r_\Lambda$ is a decreasing function for the case in which the real state is known, i.e., there is no estimation error. The imperfect state estimation can cause  temporary increases of $\xi(k)^T P_\Lambda \xi(k)$ due to the switchings between topologies. However, this potential increase is bounded by $\bar{\rho}(k)$, as defined in section~\ref{ssec:topologycond}. Furthermore, as a direct consequence of the switching condition, $\bar{\rho}(k)$ will always stay equal or decrease over every switch. Therefore, this effect cannot destabilize the system.

\end{proof}
\section{Simulation for Vehicle Platoon Control}
\label{sec:sim}
In this section, we apply the coalitional scheme to a dynamic system composed of four coupled agents, i.e., $\mathcal{N}=\lbrace 1,...,4 \rbrace$, which represent four cars. Dynamically, each of the cars is modelled by:
\begin{equation}\label{car_model}
\left[ \begin{array}{c} \dot{p}_i(t) \\ \dot{v}_i(t) \\ \dot{a}_i(t) \end{array} \right] = \left[ \begin{array}{c} v_i(t) \\ a_i(t) \\ \dfrac{1}{\tau}(u_i(t)-a_i(t)) \end{array} \right], 
\end{equation}
\noindent where $p_i$, $v_i$ and $a_i$ denote respectively position, velocity and acceleration of car $i$, $u_i$ is its input, and $\tau$ is a time constant that represents the engine's dynamics. 
\begin{figure}[htt]
\centering
    {\includegraphics[scale=0.41,trim={0cm 0.3cm 0cm 0cm},clip]{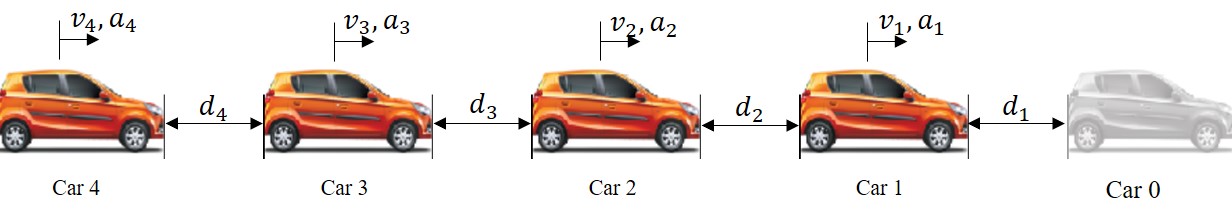}}
	\caption{System of 4 platooning cars following a lead vehicle.}
	\label{sketch_system}
\end{figure}

We consider that the five cars aim at maintaining a desired relative distance 
\begin{equation}
d_{r,i}(t)=r+h v_i(t),
\end{equation}
\noindent where $r$ is the standstill distance and $h$ is the time headway. See Fig. \ref{sketch_system} for an illustration of the system. 

As described in \cite{Zhu2019}, we can derive the following model from \eqref{car_model}:
\begin{equation}\label{eq:model}
\begin{bmatrix} \Delta \dot{d}_i \\ \Delta \dot{v}_i \\ \dot{a}_i \\ \dot{a}_{i-1} \end{bmatrix} =\begin{bmatrix} 0 & 1 & -h & 0 \\ 0 & 0 & -1 & 1 \\ 0 & 0 & -1/\tau  & 0 \\ 0 & 0 & 0 & -1/\tau \end{bmatrix} \begin{bmatrix} \Delta d_i \\ \Delta v_i \\ a_i \\ a_{i-1} \end{bmatrix} + \begin{bmatrix} 0 & 0  \\ 0 & 0  \\ 1/\tau  & 0 \\ 0 & 1/\tau \end{bmatrix}  \begin{bmatrix} u_i \\ u_{i-1} \end{bmatrix},
\end{equation}
where $\Delta d_i (t)= d_i(t)-d_{r,i}(t)$ is the intervehicle distance,  $\Delta v_i(t)=v_{i-1}(t)-v_i(t)$ is the relative velocity between consecutive cars, and $\Delta \dot{v}_i(t) = a_{i-1}(t)-a_i(t)$ is the relative acceleration. Note that the coupling between cars, in \eqref{eq:model}, is through the acceleration and input. The parameters of \eqref{eq:model} used in the simulations are $h=0.7$ and $\tau=0.1s$. Hereon, to better resemble a vehicle platoon scenario, we assume there exists a "car 0", preceding the series of our four cars, for which $a_0=0$ and $u_0=0$. In this case, the aggregation of \eqref{eq:model} for the four cars can be posed in the form of \eqref{eq:fullsys}, where each subsystem state is $x_i=[\Delta d_i, \Delta v_i, a_i]^\mathrm{T}$. The continuous-time dynamics are discretized using zero-order hold and a sampling time of 0.01 s. The control objective here is to regulate all states to zero while minimizing communication costs. In this respect, the stage performance function is defined by weighting matrices $
Q_{x}= I_{n_x}$, $R= 0.1 I_{n_u}$ and $
Q_e= 0.1 I_{n_x}$, where $I_n$ is the identity matrix of dimension $n \times n$. Also, the cost per enabled link $c$ has been set at $2.4$, $\kappa=1 s$, and the value of $\epsilon$ is 0.01.  

The underlying communication network contains three links that connect the cars in series, i.e.,  $\mathcal{L}=\{\text{I,II,III}\}$. In this respect, we assume that an enabled link allows the flow of information in both directions; also, we consider that the agents can communicate if they are either directly or indirectly connected by a path of enabled links. In Table~\ref{topologies}, we specify the 8 network topologies that can be imposed on the network.

\begin{table}[b]
\caption{Network topologies for the four cars system.}
\centering
\begin{tabular}{|c|c||c|c|}
\hline 
$\Lambda_0$ & $\lbrace \emptyset \rbrace$ & $\Lambda_4$ & $\lbrace  \text{I, II} \rbrace$ \\ 
\hline 
$\Lambda_1$ & $\lbrace \text{I} \rbrace$ & $\Lambda_5$ & $\lbrace  \text{I,III} \rbrace$ \\ 
\hline 
$\Lambda_2$ & $\lbrace  \text{II} \rbrace$ & $\Lambda_6$ & $\lbrace  \text{II,III} \rbrace$ \\ 
\hline 
$\Lambda_3$ & $\lbrace \text{III} \rbrace$ & $\Lambda_7$ & $\lbrace  \text{I,II,III} \rbrace$ \\ 
\hline 
\end{tabular} 
\label{topologies}
\end{table}

The control scheme described in Section \ref{ssec:controlscheme} has been implemented from initial state $x_{1,0}=[5,4.5,2.5]^\mathrm{T}$, $x_{2,0}=[-3.5,-4,-4.5]^\mathrm{T}$, $x_{3,0}=[2,2.5,3]^\mathrm{T}$, and $x_{4,0}=[-2.5,-3,-2]^\mathrm{T}$. Furthermore, topology switches are considered each $T=1 s$. For the controller and observer design we used the \texttt{Matlab LMI Toolbox} with the \texttt{mincx} solver. Regarding the observer, each car is assumed to measure the distance and relative velocity with respect to the preceding car, as well as their own velocity. From the distance and local velocity each vehicle can calculate the distance error. The acceleration $a_i$ should be estimated. We have set an initial estimation error of $a_i$ to $10 [m/s^2]$. Note that this initial error determines the initial value of $\rho_\Lambda$ introduced in the pseudo-code of Section \ref{ssec:controlscheme}. 

In Fig.~\ref{fig:d_i}, we show the evolution of the distance error for each car, where it can be seen that after five seconds, all cars approximately reach their desired positions. Also, in Fig. \ref{fig:v_i} and Fig. \ref{a_i}, the evolution of the relative speed, and accelerations are illustrated. All figures compare the state behavior of the coalitional controller with the respective centralized and decentralized cases, illustrating how the coalitional approach closely matches the centralized results. Fig. \ref{topologies} shows the sequence of selected topologies over time.
\begin{figure}[ht]
\centering
    {\includegraphics[width=9.5cm,height=7.3cm,trim={0.5cm 0cm 0cm 0cm},clip]{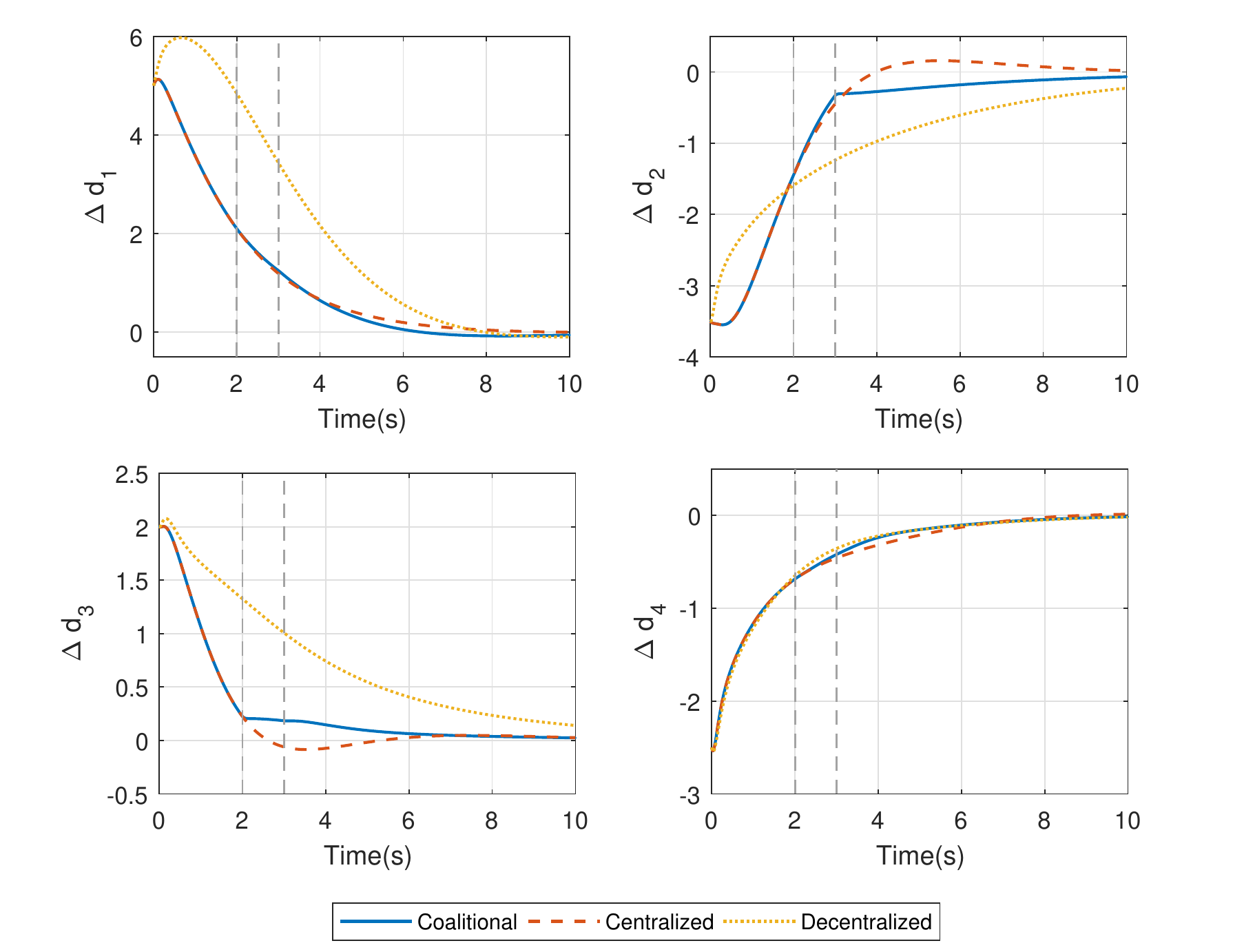}}
    \caption{Evolution of the distance error for each car.}
    \label{fig:d_i}
\end{figure}

\begin{figure}[ht]
\centering
    \includegraphics[width=9.5cm,height=7.3cm,trim={0.5cm 0cm 0cm 0cm},clip]{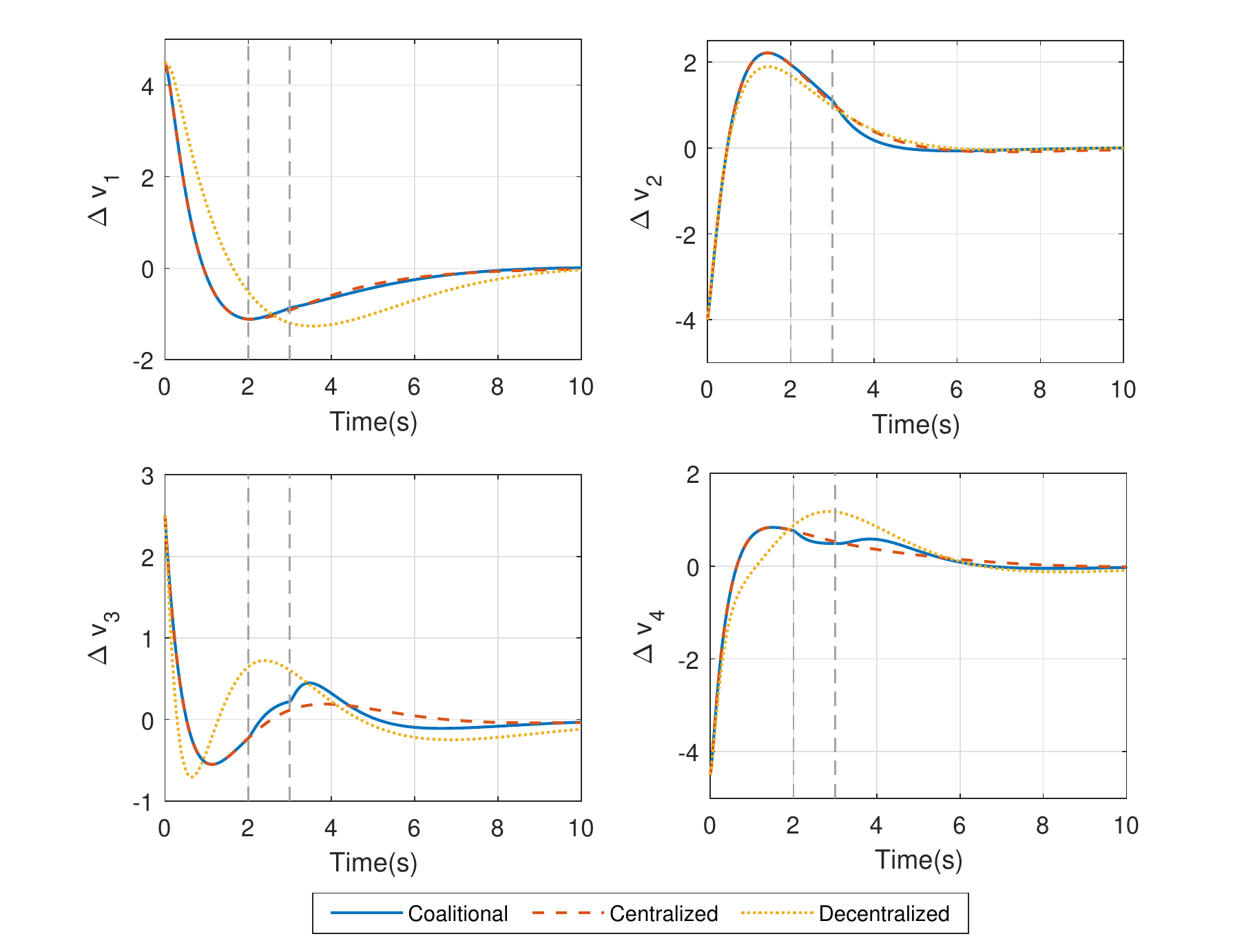}
	\caption{Evolution of the relative speed for each car.}
	\label{fig:v_i}
\end{figure}
\begin{figure}[ht]
\centering
    \includegraphics[width=9.5cm,height=7.3cm,trim={0.5cm 0cm 0cm 0cm},clip]{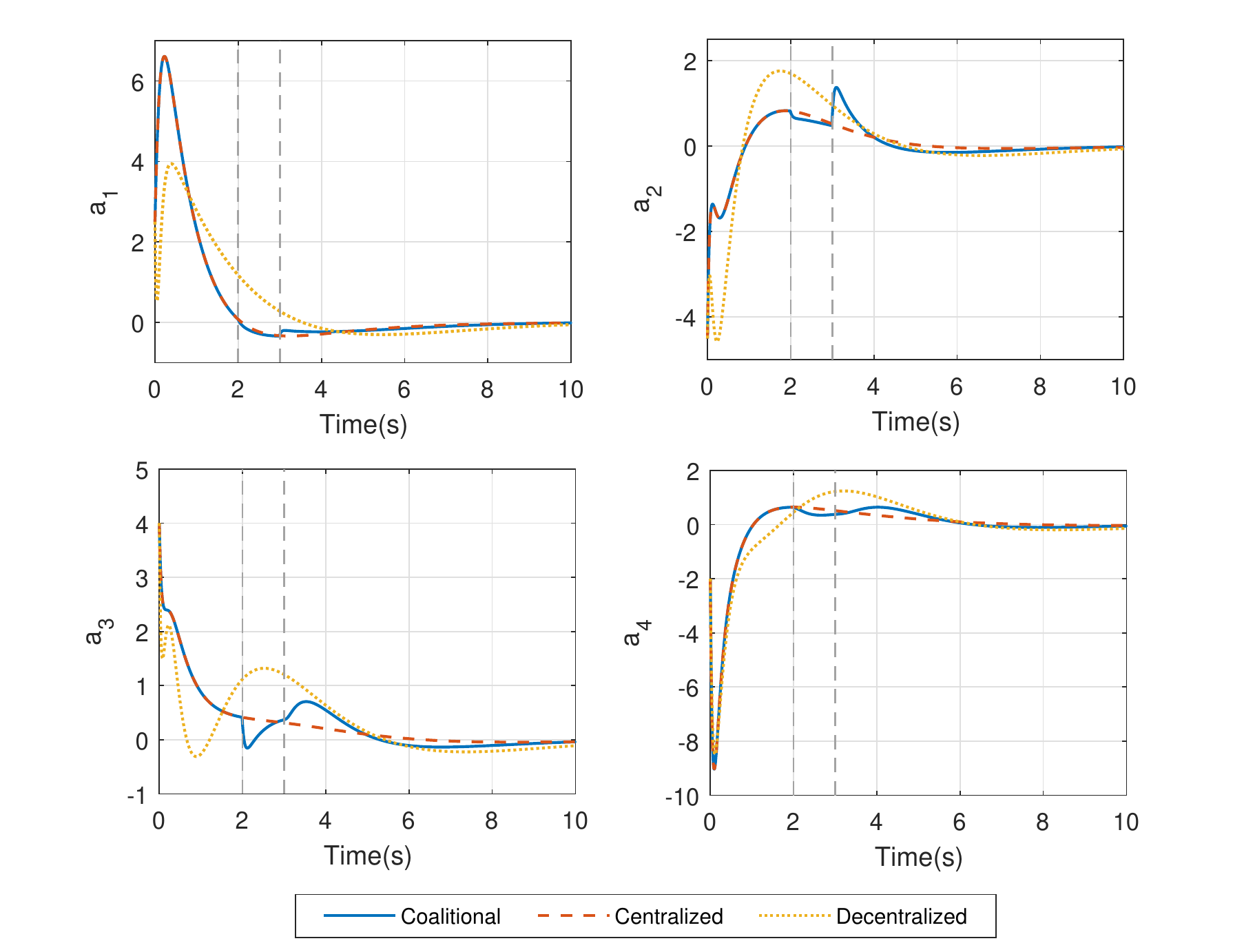}
	\caption{Evolution of the acceleration for each car.}
	\label{a_i}
\end{figure}
\begin{figure}[ht]
\centering
    {\includegraphics[width=7.8cm,height=2.2cm,trim={0.cm 0cm 0cm 0cm},clip]{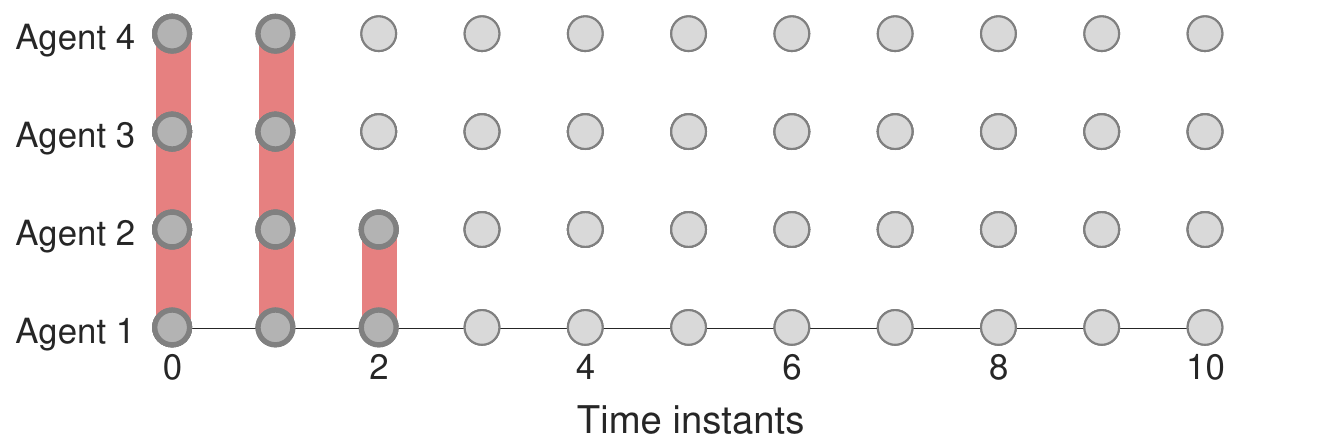}}
	\caption{Topologies imposed each time instant.}
	\label{top}
\end{figure}
Finally, to evaluate the controller's performance, we can compare the results obtained in terms of the cost function. In particular, to assess the control performance we consider $J_x=\sum_k (x_{\mathcal{N}}(k)^\mathrm{T} Q_x  x_{\mathcal{N}}(k) + u_{\mathcal{N}}(k)^\mathrm{T} R  u_{\mathcal{N}}(k))$, and, likewise, to evaluate the observer, we use $J_e=\sum_k e_{\mathcal{N}}(k)^\mathrm{T} Q_e e_{\mathcal{N}}(k)$. The value of $J_x$ for the coalitional controller is $1.48 \cdot 10^4$, while the centralized structure leads to $1.46 \cdot 10^4$ , and the decentralized ends at $2.27 \cdot 10^4$. For the observer, the coalitional controller reaches a cost $J_e$ equal to 200.6, which matches approximately the centralized case, however, for the decentralized observer it goes up to 211.63.

\section{Conclusions}
\label{sec:conclusion}
In this paper, a networked system of agents, which can be interconnected by both state and input is considered. For this class of systems a coalitional control and observation scheme is presented, in which the communication topologies are changed on-line. Switches between topologies are designed such that they contribute to minimizing a cost function which considers: control performance, control effort, state estimation performance, and communication costs. The designed control and observation scheme only requires the agents to communicate their own measurements and inputs within their coalitions. Inter-coalitions communication is only required for deciding the switchings of topologies.

The proposed control and observation scheme relies on a combination of ellipsoidal bounds on the state, input and state observation error, which are defined by solving linear matrix inequalities (LMIs). With these ellipsoidal bounds the control and observation scheme can be proven to stabilize the overall networked system. A simulation of a vehicle platoon control problem is shown to illustrate the proposed method.

Future research will extend these results to even more realistic scenarios, such as the case of Collaborative Adaptive Cruise Control (CACC). Furthermore, after the introduction of a state observer in this paper, introducing fault observers in the coalitional framework would be a very interesting step.

\bibliography{references}
\end{document}